\newcommand{\norm}[1]{\left\lVert#1\right\rVert} 
\theoremstyle{plain}
\newtheorem{theorem}{Theorem}
\newtheorem{lemma}[theorem]{Lemma}
\newtheorem*{remark*}{Remark}   
\renewcommand\qedsymbol{$\blacksquare$}
\newenvironment{proof-of}[1][{\hspace{-\blank}}]{{\medskip\noindent\textit{Proof~{#1}.\ }}}{\hfill\qedsymbol}
\renewcommand{\Tr}{{\operatorname{Tr}\,}}
\newcommand{\1}{\openone}
\newcommand{\proj}[1]{|#1\rangle\!\langle #1|}
\newcommand{\cD}{{\mathcal{D}}}
\newcommand{\nc}{\newcommand}
\nc{\rnc}{\renewcommand}
\nc{\avg}[1]{\langle#1\rangle}
\nc{\Rank}{\operatorname{Rank}}
\nc{\smfrac}[2]{\mbox{$\frac{#1}{#2}$}}
\nc{\ox}{\otimes}
\nc{\dg}{\dagger}
\nc{\dn}{\downarrow}
\nc{\cA}{{\cal A}}
\nc{\cB}{{\cal B}}
\nc{\cC}{{\cal C}}
\nc{\cF}{{\cal F}}
\nc{\cG}{{\cal G}}
\nc{\cH}{{\cal H}}
\nc{\cI}{{\cal I}}
\nc{\cJ}{{\cal J}}
\nc{\cK}{{\cal K}}
\nc{\cL}{{\cal L}}
\nc{\cM}{{\cal M}}
\nc{\cN}{{\cal N}}
\nc{\cO}{{\cal O}}
\nc{\cP}{{\cal P}}
\nc{\cQ}{{\cal Q}}
\nc{\cR}{{\cal R}}
\nc{\cS}{{\cal S}}
\nc{\cX}{{\cal X}}
\nc{\cY}{{\cal Y}}
\nc{\cW}{{\cal W}}
\nc{\cZ}{{\cal Z}}
\nc{\csupp}{{\operatorname{csupp}}}
\nc{\qsupp}{{\operatorname{qsupp}}}
\nc{\RR}{{{\mathbb R}}}
\nc{\CC}{{{\mathbb C}}}
\nc{\FF}{{{\mathbb F}}}
\nc{\NN}{{{\mathbb N}}}
\nc{\ZZ}{{{\mathbb Z}}}
\nc{\PP}{{{\mathbb P}}}
\nc{\QQ}{{{\mathbb Q}}}
\nc{\UU}{{{\mathbb U}}}
\nc{\EE}{{{\mathbb E}}}
\begin{document}
\title{On Strong Converse Bounds for the Private and Quantum \protect\\ Capacities of Anti-degradable
Channels}

\author{Zahra Baghali Khanian}
\email{zbkhanian@gmail.com }
\email{zkhanian@perimeterinstitute.ca}
\affiliation{Perimeter Institute for Theoretical Physics, Ontario, Canada, N2L 2Y5}
\affiliation{Institute for Quantum Computing, University of Waterloo, Ontario, Canada, N2L 3G1}

\author{Christoph Hirche}
\email{christoph.hirche@gmail.com}
\affiliation{Institute for Information Processing (tnt/L3S), Leibniz Universit\"at Hannover, Germany}

\begin{abstract}
 
We establish a strong converse bound for the private classical capacity of anti-degradable quantum channels. Specifically, we prove that this capacity is zero whenever the error $\epsilon > 0$ and privacy parameter $\delta > 0$ satisfy the inequality $\delta \sqrt{1-\epsilon^2}+\epsilon \sqrt{1-\delta^2}<1$. This result strengthens previous understandings by sharply defining the boundary beyond which reliable and private communication is impossible. Furthermore, we present a ``pretty simple'' proof of the ``pretty strong'' converse for the quantum capacity of anti-degradable channels, valid for any error $\epsilon < \frac{1}{\sqrt{2}}$. Our approach offers  
clarity and technical simplicity, shedding new light on the fundamental limits of quantum communication.

\end{abstract}

\maketitle

\section{Introduction}

The private classical capacity of a quantum channel is the maximum rate at
which classical information can be transmitted reliably and privately, ensuring
that an eavesdropper has no knowledge of the transmitted information. 
On the other hand, the quantum capacity of a channel is defined as the
highest rate at which quantum information can be transmitted reliably.
Both capacities are defined in the limit as the number of uses of the channel
tends to infinity \cite{Shor_direct_capacity2002,Lloyd_capacity_97,Devetak-capacity-2005}.

The quantum capacity theorem consists of two parts: a direct part and a converse part.
The direct part asserts that for rates below a certain threshold, it is possible to
construct codes with decoding errors (and privacy for the private capacity)
that tend to zero as the number of channel uses increases.
The (weak) converse states that if the rate lies above
this threshold then the error does not go to $0$ for any sequence
of codes. 
A strong converse, on the other hand, is the statement that for rates above the capacity the
error converges to its maximum $1$  in the limit of a large number of channel uses.

Except for certain examples of channels (such as PPT entanglement binding channels and the identity channel),  a strong converse is not known in general
for the quantum and private classical capacities.
%
Nevertheless, the concept of a ``pretty strong'' converse is introduced in \cite{AW_pretty}. 
They show that for rates above the capacity, the error undergoes a discontinuous jump 
from 0 to a constant value, but not necessarily to the maximum value of 1.
More specifically, for the private capacity of degradable channels they show that for any error $\epsilon>0$
and privacy $\delta>0$ satisfying $\epsilon+2\delta<\frac{1}{\sqrt{2}}$ a 
 pretty strong converse bound holds.
Also, for the quantum capacity of degradable channels, a pretty strong converse bound holds for any $\epsilon <\frac{1}{\sqrt{2}}$.


As mentioned above, so far only a \textit{pretty} strong converse bound is known for the private classical capacity. 
In this paper, we prove that  a \textit{strong} converse bound holds for the private classical capacity of anti-degradable channels: for any error $\epsilon>0$ and privacy $\delta >0$ satisfying $\delta \sqrt{1-\epsilon^2}+\epsilon \sqrt{1-\delta^2}<1$  
the size of the message set is bounded above by a constant independent of $n$, i.e. the number of channel uses. In other words,
the rate is bounded above by 0 in the limit of $n$ converging to $\infty$.
This bound can be simplified to $\delta +\epsilon <1$. 
So, for example, as long as the privacy $\delta$ is very small, the error $\epsilon$ can be arbitrarily close to 1 (or vice versa).
%
%
%
For the quantum capacity of anti-degradable channels, it is shown in \cite{Kaur_pretty} that a pretty strong converse holds for any error $\epsilon < \frac{1}{2}$.
We provide a slightly stronger result and a simpler proof: a pretty strong converse holds for any error $\epsilon < \frac{1}{\sqrt{2}}$.


\bigskip
The paper is structured as follows: In Section~\ref{Sec:Preliminaries} we introduce the necessary notation and preliminary results. In Section~\ref{Sec:Capacities} we define the quantum and private capacities. Then in Sections~\ref{Sec:StrongConversePrivate} and~\ref{Sec:StrongConverseQuantum} we prove our results regarding the private and quantum capacities, respectively. Finally, we discuss our results in Section~\ref{Sec:Discussion}.

\section{Notation and Preliminaries}\label{Sec:Preliminaries}
In this paper, quantum systems are associated with finite dimensional Hilbert spaces $A$, $B$,\dots $R$, etc., whose dimensions are denoted by $|A|$, $|B|$,\dots $|R|$, respectively.
Quantum channels $\cN: \cL(A') \to \cL(B)$, and therefore also the encoding and decoding of quantum information,  are completely positive and trace-preserving (CPTP) maps. 
As there is no  risk of confusion, we denote a map with $\cN: A' \to B$.
We denote the set of normalized and sub-normalized (trace less than or equal to 1)  quantum states on Hilbert space $A$ by $\cS(A)$ and $\cS_{\leq}(A)$, respectively. 
%
%
The conditional entropy and the coherent information, $S(A|B)_{\rho}$ and $I(A \rangle B)_{\rho}$, respectively, are defined as
\begin{align*}
  S(A|B)_{\rho}   &= S(AB)_\rho-S(B)_{\rho}, \text{ and} \\ 
  I(A \rangle B)_{\rho} &= -S(A|B)_{\rho} = S(B)_{\rho} - S({AB})_{\rho},
\end{align*}
where $S(A)_{\rho} = -\Tr \rho \log \rho$ is the von Neumann entropy. 

The fidelity between two states $\rho$ and $\xi$ is defined as 
\(
 F(\rho, \xi) = \|\sqrt{\rho}\sqrt{\xi}\|_1 
                = \Tr \sqrt{\rho^{\frac{1}{2}} \xi \rho^{\frac{1}{2}}},
\) 
with the trace norm $\|X\|_1 = \Tr|X| = \Tr\sqrt{X^\dagger X}$. The purified distance between two states $\rho$ and $\xi$
is defined as $P(\rho, \xi):=\sqrt{1-F^2(\rho, \xi)}$.
It relates to the trace distance in the following well-known way \cite{Fuchs1999}:
\begin{equation}\label{lemma: Fuchs}
  1-F(\rho,\xi) \leq \frac12\|\rho-\xi\|_1 \leq \sqrt{1-F(\rho,\xi)^2}.
\end{equation}

For a bipartite state $\rho^{AB} \in \cS_{\leq}(AB)$ the min-entropy of $A$ conditioned on $B$ is defined as (see \cite{Tomamichel_book})
\begin{align}
H_{\min}(A|B)_{\rho}:=\max_{\sigma_B \in \cS(B)} \max \{\lambda \in \mathbb{R} : \rho^{AB} \leq 2^{- \lambda} \1 \ox \sigma^{B} \}.
\end{align}
With a purification $\ket{\psi}^{ABC}$ of $\rho^{AB}$, we define the max-entropy 
\begin{align}
H_{\max}(A|B)_{\rho}:=-H_{\min}(A|C)_{\psi^{AC}},
\end{align}
with the reduced state $\psi^{AC}=\Tr_B ({\psi}^{ABC})$.

Let $\epsilon \geq 0$
and $\rho^{AB} \in \cS_{\leq}(AB)$.
The $\epsilon$-smooth min-entropy of $A$ conditioned on $B$ is defined as
\begin{align}
H^{\epsilon}_{\min}(A|B)_{\rho}:= \max_{\rho'  \approx_{\epsilon} \rho} H_{\min}(A|B)_{\rho'},
\end{align}
where $\rho' \approx_{\epsilon} \rho$ means $P(\rho',\rho) \leq \epsilon$ for $\rho' \in \cS_{\leq}(AB)$. 
Similarly,
\begin{align}\label{eq: smooth duality}
H^{\epsilon}_{\max}(A|B)_{\rho}&:= \min_{\rho'  \approx_{\epsilon} \rho} H_{\max} (A|B)_{\rho'} \nonumber\\
&=-H^{\epsilon}_{\min}(A|C)_{\psi},
\end{align}
with a purification $\ket{\psi}^{ABC}$ of $\rho^{AB}$ \cite{Dual_min_max_2010}.

The following relation between smooth min-max entropies holds.
\begin{lemma}\label{lemma: H_min < H_max+terms} (Proposition 5.5 in \cite{Tomamichel_PhD})
    Let $\rho \in \cS(AB)$, $\alpha,\beta \geq 0$ and $\alpha+\beta  <\frac{\pi}{2}$. Then
    \begin{align}
        H_{\min}^{\sin(\alpha)}(A|B)_{\rho} &\leq H_{\max}^{\sin(\beta)}(A|B)_{\rho}+\log \frac{1}{\cos^2(\alpha+\beta)}.  
            \end{align}
\end{lemma}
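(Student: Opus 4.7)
The plan is to combine a standard non-smooth inequality of the form $H_{\min} \leq H_{\max}$ with a smoothing argument that exploits the fact that the purified distance, being the sine of an angle, satisfies a triangle inequality at the level of angles rather than of $P$ itself. This is precisely what produces the tight constant $\cos^2(\alpha+\beta)$ rather than something looser in $\sin\alpha+\sin\beta$.

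First I would pick subnormalized optimizers: $\bar\rho^{AB} \approx_{\sin\alpha} \rho$ with $H_{\min}(A|B)_{\bar\rho} = H_{\min}^{\sin(\alpha)}(A|B)_\rho$, and $\tilde\rho^{AB} \approx_{\sin\beta} \rho$ with $H_{\max}(A|B)_{\tilde\rho} = H_{\max}^{\sin(\beta)}(A|B)_\rho$. Since $\arcsin P(\cdot,\cdot)$ is a genuine metric on $\cS_{\leq}(AB)$, applying its triangle inequality through $\rho$ gives
\begin{align*}
\arcsin P(\bar\rho, \tilde\rho) \leq \alpha + \beta,
\end{align*}
and hence $F(\bar\rho, \tilde\rho) \geq \cos(\alpha + \beta) > 0$ under the hypothesis $\alpha+\beta < \pi/2$.

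Second, and this is the main technical step, I would establish the non-smooth cross-state inequality
\begin{align*}
H_{\min}(A|B)_{\bar\rho} \leq H_{\max}(A|B)_{\tilde\rho} + \log \frac{1}{F(\bar\rho, \tilde\rho)^2}.
\end{align*}
To prove it, take $\sigma_B \in \cS(B)$ achieving $H_{\min}(A|B)_{\bar\rho} = \lambda$, so that $\bar\rho \leq 2^{-\lambda}\id_A \ox \sigma_B$. Using the SDP-dual characterization of $H_{\max}$, namely $2^{H_{\max}(A|B)_{\tilde\rho}} = \max_{\tau_B \in \cS(B)} F(\tilde\rho, \id_A \ox \tau_B)^2$, and plugging in the candidate $\tau_B = \sigma_B$, one lower bounds the right-hand side by a fidelity of $\tilde\rho$ with the intermediate operator $2^{-\lambda}\id_A \ox \sigma_B$ which dominates $\bar\rho$. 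A two-step application of Uhlmann's theorem then transports the closeness between $\bar\rho$ and $\tilde\rho$ into the desired fidelity bound, producing the factor $F(\bar\rho, \tilde\rho)^2$ and the compensating $2^\lambda$.

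Assembling the two steps yields
\begin{align*}
H_{\min}^{\sin(\alpha)}(A|B)_\rho = H_{\min}(A|B)_{\bar\rho} \leq H_{\max}(A|B)_{\tilde\rho} + \log \frac{1}{\cos^2(\alpha+\beta)} = H_{\max}^{\sin(\beta)}(A|B)_\rho + \log \frac{1}{\cos^2(\alpha+\beta)},
\end{align*}
which is the claim. The main obstacle is the cross-state non-smooth inequality of step two: the angle-triangle-inequality part is a clean application of a known metric property, but transporting a bound on $F(\bar\rho, \tilde\rho)$ into a bound on a fidelity with an operator of the form $\id_A \ox \sigma_B$ requires a careful Uhlmann-type argument, since $\sigma_B$ is dictated by the $H_{\min}$ optimization on $\bar\rho$ rather than by any structure of $\tilde\rho$.
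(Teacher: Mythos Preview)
The paper does not prove this lemma; it merely quotes it from Tomamichel's thesis. Your sketch is essentially the argument found there: pick smoothing optimizers $\bar\rho,\tilde\rho$, use that the Bures angle $\arcsin P(\cdot,\cdot)$ obeys a triangle inequality to get $F(\bar\rho,\tilde\rho)\geq\cos(\alpha+\beta)$, and then invoke the cross-state inequality $H_{\min}(A|B)_{\bar\rho}\leq H_{\max}(A|B)_{\tilde\rho}-2\log F(\bar\rho,\tilde\rho)$.

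One small simplification: the cross-state step does not need Uhlmann's theorem. If $\bar\rho\leq 2^{-\lambda}\,\1_A\ox\sigma_B$, then by operator monotonicity of the square root (hence monotonicity of $F(\tilde\rho,\cdot)$ in the operator order),
\[
F(\tilde\rho,\bar\rho)\;\leq\;F\bigl(\tilde\rho,\,2^{-\lambda}\,\1_A\ox\sigma_B\bigr)\;=\;2^{-\lambda/2}\,F(\tilde\rho,\1_A\ox\sigma_B),
\]
and squaring together with the variational formula $2^{H_{\max}(A|B)_{\tilde\rho}}=\max_{\tau_B}F(\tilde\rho,\1_A\ox\tau_B)^2$ yields the desired bound directly. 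So your ``main obstacle'' is in fact a one-line monotonicity argument rather than a two-step Uhlmann construction.
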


\medskip
A channel $\cN: A \to B$ is called degradable if it can be degraded to its
complementary channel, i.e. if there exists a CPTP  map $\cM$ such
that $\cN_c =\cM \circ \cN$. Introducing the Stinespring dilation of $\cM$
by an isometry $V : B  \hookrightarrow B' E'$, the channel output system
$B$ can be mapped to the composite system $B' \ox E'$ such that
the channel taking $A$ to $E'$ is the same as the channel taking
$A$ to $E$ (with an isomorphism between $E$ and $E'$ fixed once
and for all). 
If the complementary channel is degradable, i.e. if $\cN =\cM \circ \cN_c$ for some CPTP map,
we call $\cN$ anti-degradable. 

\section{Capacity Definitions}\label{Sec:Capacities}
In this section, we define the quantum capacity and the private classical capacity of quantum channels.

\bigskip

\noindent \textbf{The entanglement generation code: }
An $(n,\epsilon)$ entanglement generation code is defined as follows. Alice prepares w.l.o.g a bipartite pure state  $\ket{\psi}^{{A'}^nR}$, and sends ${A'}^n$ to Bob through
$n$ uses of the channel $\cN: A' \to B$ with an isometric extension $U_{\cN}: A' \hookrightarrow BE$. The output state of the isometric extension $U_{\cN}^{\ox n}$ of the channel is
\begin{align}
    \ket{\sigma}^{B^nE^nR}:=(U_{\cN}^{\ox n}\ox \1_{R})\ket{\psi}^{{A'}^nR},
\end{align}
where $E^n$ is the environment system of the channel.
Then, Bob applies the decoding map $\cD_n: B^n \to A$. The output state of the isometric extension $U_{\cD_n}: B^n \hookrightarrow A W$ of the decoding map is
\begin{align}
    \ket{\xi}^{AE^nWR}:=(U_{\cD_n} \ox \1_{E^nR})\ket{\sigma}^{B^nE^nR},
\end{align}
where $W$ is the environment system of the decoding isometry. 
The code has \textit{error} $\epsilon$ if
\begin{align}
    P(\proj{\Phi}^{AR},\xi^{AR}) \leq \epsilon, 
\end{align}
%
where ${\xi}^{AR}=\Tr_{E^n W} (\proj{\xi}^{AE^nWR})$, and ${\Phi}^{AR}$ is a maximally entangled state of dimension $|A|=|R|$.
The rate of the code is defined as $\frac{\log |A|}{n}$.
The maximum dimension $|A|$ such that there exists a
quantum code for $\cN^{\ox n}$ with error $\epsilon$ is denoted by $N(n, \epsilon)$. 
The (asymptotic) quantum capacity $\cQ(\cN)$ of a quantum channel $\cN$ is given by:
\begin{align}
    \cQ(\cN) = \inf_{\varepsilon > 0} \liminf_{n \to \infty} \frac{1}{n} \log N(n, \varepsilon).
\end{align}

\noindent \textbf{The private classical capacity code: }
An $(n,\epsilon,\delta)$ private classical capacity  code is defined as follows. Alice prepares the state below,
\begin{align}\label{eq: ideal rho^XX'}
    \rho^{XX'} = \sum_m \frac{1}{M} \proj{m}^X \otimes \proj{m}^{X'}.
\end{align}
Then, she applies an encoding map $\mathcal{E}_n: X \to {A'}^n$ to prepare
\begin{align}
    \nu^{{A'}^n} = \sum_m \frac{1}{M} \nu_m^{{A'}^n} \otimes \proj{m}^{X'}.
\end{align}
She sends ${A'}^n$ to Bob through
$n$ uses of a channel $\cN: A' \to B$ with an isometric extension $U_{\cN}: A' \hookrightarrow BE$. The output state of the isometric extension $U_{\cN}^{\ox n}$ of the channel is
\begin{align}
    \sigma^{B^n E^n X'} = \sum_m \frac{1}{M} \sigma_m^{B^n E^n} \otimes \proj{m}^{X'}.
\end{align}
Bob receives $B^n$ and applies the decoding map $\cD_n: B^n \to X$ 
\begin{align}
    \xi^{{X} E^n X'} = \sum_m \frac{1}{M} \xi_m^{{X} E^n} \otimes \proj{m}^{X'}.
\end{align}
The  code  has respectively \textit{error} $\epsilon$ and \textit{privacy} $\delta$ if
\begin{align}
    P\left( \xi^{{X} X'}, \rho^{X  X'} \right) &\leq \epsilon, \label{eq: P epsilon} \\
     P\left(\xi^{E^n X'}, \rho^{E^n  X'} \right) &\leq \delta \label{eq: P delta},
\end{align}
where $\rho^{ E^n X'} = \omega^{E^n} \ox \sum_m \frac{1}{M} \proj{m}^{X'}  $, and
$\omega^{E^n}$ is a fixed state.
For a given channel $\cN$, we denote the largest $M$ such that there exists a private classical capacity  code with error $\epsilon$
and privacy $\delta$ by $M(n,\epsilon,\delta)$. 
The (asymptotic) private classical capacity $\cP(\cN)$ of a quantum channel $\cN$ is given by:
\begin{align}
    \cP(\cN) = \inf_{\varepsilon,\delta > 0} \liminf_{n \to \infty} \frac{1}{n} \log M(n, \varepsilon,\delta).
\end{align}

\begin{figure}[h] 
  \includegraphics[width=0.6\textwidth]{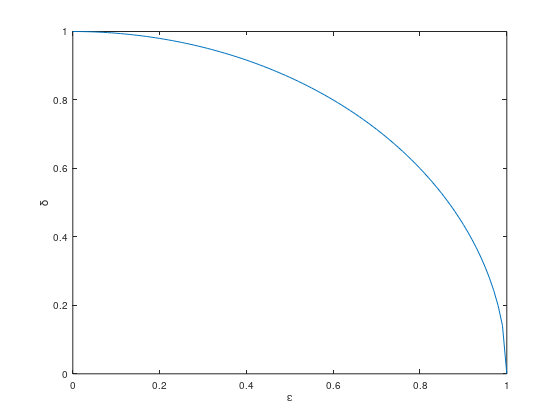}
  
  \caption{ {$\delta$ vs $\epsilon$: for any $\delta$ and $\epsilon$ satisfying $\delta \sqrt{1-\epsilon^2}+\epsilon \sqrt{1-\delta^2}<1$ the converse bound holds. }}
  \label{fig: x function}
\end{figure}

\section{Strong Converse for the Private Capacity }\label{Sec:StrongConversePrivate}
In this section, we obtain a strong converse bound for the private classical capacity of anti-degradable channels 
(the private classical capacity of anti-degradable channels is equal to 0).%
\begin{theorem}
Let $\cN:A' \to B$ be an anti-degradable channel
with finite quantum systems $A'$ and $B$. Then, for any error $\epsilon$
and privacy $\delta$ satisfying $\delta \sqrt{1-\epsilon^2}+\epsilon \sqrt{1-\delta^2}<1$
 and every integer $n$,
\begin{align*}
    \log M(n,\epsilon,\delta)&\leq   2\log (\frac{1}{\cos(\alpha+\beta)}),
\end{align*}
where $ \beta=\sin^{-1}(\delta)$ and $ \alpha=\sin^{-1}(\epsilon)$.
\end{theorem}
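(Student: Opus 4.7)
The plan is to combine two smooth-entropic one-sided bounds through Lemma \ref{lemma: H_min < H_max+terms}, using anti-degradability as the bridge between Eve's and Bob's systems. Throughout, I treat $X'$ as the reference register that carries the true classical message and write $\sigma^{B^nE^nX'}$ for the post-channel state.

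First, I would extract from privacy a min-entropy lower bound on Eve's side. Since $\cD_n$ acts only on $B^n$, the marginal $\sigma^{E^nX'}$ coincides with $\xi^{E^nX'}$, which by \eqref{eq: P delta} lies within purified distance $\delta$ of the product $\omega^{E^n}\ox\rho^{X'}$; on the latter the uniform classical $X'$ satisfies $H_{\min}(X'|E^n)=\log M$, so
\[
H_{\min}^{\delta}(X'|E^n)_{\sigma}\geq\log M.
\]
Anti-degradability provides a CPTP map $\cM$ with $\cN=\cM\circ\cN_c$, hence $\sigma^{B^nX'}=\cM^{\ox n}(\sigma^{E^nX'})$, and data processing for the smooth min-entropy transfers the bound to Bob: $H_{\min}^{\delta}(X'|B^n)_{\sigma}\geq\log M$.

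Second, I would extract from the decoding error a max-entropy upper bound on Bob's side. Data processing under $\cD_n:B^n\to X$ gives $H_{\max}^{\epsilon}(X'|B^n)_{\sigma}\leq H_{\max}^{\epsilon}(X'|X)_{\xi}$, and by \eqref{eq: P epsilon} one has $\xi^{XX'}\approx_{\epsilon}\rho^{XX'}$; on the perfectly correlated classical-classical state $\rho^{XX'}$ one has $H_{\max}(X'|X)_{\rho}=0$, whence $H_{\max}^{\epsilon}(X'|B^n)_{\sigma}\leq 0$. With $\alpha=\sin^{-1}(\epsilon)$ and $\beta=\sin^{-1}(\delta)$ as in the statement, the hypothesis $\delta\sqrt{1-\epsilon^2}+\epsilon\sqrt{1-\delta^2}<1$ reads $\sin(\alpha+\beta)<1$, so $\alpha+\beta<\pi/2$; applying Lemma \ref{lemma: H_min < H_max+terms} with min-smoothing $\delta$ and max-smoothing $\epsilon$ chains the two one-sided bounds into
\[
\log M\leq H_{\min}^{\delta}(X'|B^n)_{\sigma}\leq H_{\max}^{\epsilon}(X'|B^n)_{\sigma}+2\log\tfrac{1}{\cos(\alpha+\beta)}\leq 2\log\tfrac{1}{\cos(\alpha+\beta)},
\]
which is the claimed estimate, uniformly in $n$.

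The only delicate step is deducing $H_{\max}^{\epsilon}(X'|X)_{\xi}\leq 0$ from the classical-classical closeness $\xi^{XX'}\approx_{\epsilon}\rho^{XX'}$: one must verify that the smoothing ball of $H_{\max}^{\epsilon}$ is a purified-distance ball so that $\rho^{XX'}$ itself is an admissible smoothing of $\xi^{XX'}$, and one must handle sub-normalized states with care. Everything else is routine: the classical-quantum structure of $\sigma$ persists through both the channel and the decoder, and both data-processing inequalities used above are standard properties of smooth entropies.
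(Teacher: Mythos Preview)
Your proof is correct and follows essentially the same route as the paper: both arguments establish $\log M\leq H_{\min}^{\delta}(X'|B^n)_{\sigma}$ via privacy plus anti-degradability, establish $H_{\max}^{\epsilon}(X'|B^n)_{\sigma}\leq 0$ via the decoding condition plus data processing for the decoder, and then close the gap with Lemma~\ref{lemma: H_min < H_max+terms}. The only cosmetic difference is that the paper adds the two one-sided bounds and then invokes the lemma, whereas you chain them directly; the delicate point you flag about the smoothing ball is handled in the paper simply by its definition of smooth entropies in terms of purified distance.
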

\begin{proof}
For the target state  $\rho^{ E^n X'} = \omega^{E^n} \ox \sum_m \frac{1}{M} \proj{m}^{X'}  $, the environment is decoupled from the classical 
reference, hence $H_{\min}(X'|E^n)_{\rho}=\log M$. We apply this to obtain
\begin{align}\label{eq: C_P_1}
    \log M(n,\epsilon,\delta)=H_{\min}(X'|E^n)_{\rho}
    \leq H_{\min}^{\delta}(X'|E^n)_{\sigma} \leq H_{\min}^{\delta}(X'|B^n)_{\sigma},
\end{align}
where the first inequality follows from the definition of smooth min-entropy and that
$P(\rho^{E^nX'},\sigma^{E^nX'})\leq \delta$ holds. The second inequality is due to data processing: for anti-degradable channels $B^n$ can be obtained  
by applying a CPTP map on $E^n$.
Also, we note that for the target decoded state $\rho^{X'X}$ in Eq.~(\ref{eq: ideal rho^XX'}) we have $H_{\max}(X'|X)_{\rho}=0$:
\begin{align}\label{eq: C_P_2}
    0=-H_{\max}(X'|X)_{\rho} 
    \leq -H_{\max}^{\epsilon}(X'|X)_{\xi} \leq -H_{\max}^{\epsilon}(X'|B^n)_{\sigma}
\end{align}
where the first inequality follows from the definition of smooth max-entropy and that
$P(\rho^{X'X},\xi^{X'X})\leq \epsilon$ holds. 
The last inequality is data processing: the output on $X$ is obtained by a applying a CPTP on $B^n$.
Let $ \beta:=\sin^{-1}(\delta)$ and $ \alpha:=\sin^{-1}(\epsilon)$.
By adding Eq. (\ref{eq: C_P_1}) and Eq. (\ref{eq: C_P_2}) we obtain
\begin{align}
    \log M(n,\epsilon,\delta)&\leq H_{\min}^{\sin \beta}(X'|B^n)_{\sigma}-H_{\max}^{\sin \alpha}(X'|B^n)_{\sigma} \\
    &\leq  H_{\max}^{\sin \alpha}(X'|B^n)_{\sigma}-H_{\max}^{\sin \alpha}(X'|B^n)_{\sigma}+
    \log (\frac{1}{\cos^2(\alpha+\beta)})\\
    &=  2\log (\frac{1}{\cos(\alpha+\beta)}),
\end{align}
where the second line follows from Lemma~\ref{lemma: H_min < H_max+terms}, 
and it holds for $\alpha+\beta <\frac{\pi}{2}$. This implies that for any $\delta$ and $\epsilon$ satisfying $\delta \sqrt{1-\epsilon^2}+\epsilon \sqrt{1-\delta^2}<1$ the strong converse holds. This relation is illustrated in Fig.~1. 

\end{proof}


\section{Pretty strong converse for the quantum capacity}\label{Sec:StrongConverseQuantum}
In this section, we provide a simple proof that a pretty strong converse bound holds for the quantum capacity of anti-degradable channels
(the quantum capacity of anti-degradable channels is equal to 0).%
\begin{theorem}
Let $\cN:A' \to B$ be an anti-degradable channel
with finite quantum systems $A'$ and $B$. Then, for any error $\epsilon< \frac{1}{\sqrt{2}}$
and every integer $n$,
\begin{align*}
   \log N(n,\epsilon)\leq  \log (\frac{1}{\cos(2\alpha)}). 
\end{align*}
where $ \alpha=\sin^{-1}(\epsilon)$.
\end{theorem}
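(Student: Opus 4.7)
The plan is to bypass smooth-entropy machinery entirely and run a compact quantitative no-cloning / monogamy argument using anti-degradability directly; this is what allows the improvement over the $\epsilon < 1/2$ bound of \cite{Kaur_pretty} with a genuinely short proof. Since $\cN$ is anti-degradable, fix a CPTP map $\cM$ with $\cN = \cM \circ \cN_c$. Then Eve, holding $E^n$ of the pure state $\ket{\sigma}^{B^n E^n R}$, can first apply $\cM^{\otimes n}$ to simulate Bob's channel output and then run Bob's decoder $\cD_n$; by anti-degradability the resulting state $\tilde{\xi}^{\tilde{A} R}$ has the same marginal on $\tilde{A} R$ as $\xi^{AR}$ has on $AR$, and in particular $P(\Phi^{\tilde{A} R}, \tilde{\xi}^{\tilde{A} R}) \leq \epsilon$.

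Concretely, I would form the joint decoded state
\begin{align*}
\zeta^{A \tilde{A} R} := \bigl(\cD_n \otimes (\cD_n \circ \cM^{\otimes n}) \otimes \id_R\bigr)(\sigma^{B^n E^n R}),
\end{align*}
whose marginals on $AR$ and on $\tilde{A} R$ are each at purified distance at most $\epsilon$ from the corresponding maximally entangled state $\Phi$ of dimension $|A| = N(n,\epsilon)$. Using $F(\Phi,\rho)^2 = \Tr(\Phi\rho)$ for the pure state $\Phi$, the two rank-$|A|$ projectors $\Pi_1 := \proj{\Phi}^{AR} \otimes \1^{\tilde{A}}$ and $\Pi_2 := \1^A \otimes \proj{\Phi}^{\tilde{A} R}$ satisfy $\Tr(\Pi_j \zeta) \geq \cos^2\alpha$ for $j = 1,2$, so
\begin{align*}
\Tr\!\bigl[(\Pi_1 + \Pi_2)\zeta\bigr] \geq 2\cos^2 \alpha = 1 + \cos(2\alpha).
\end{align*}

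The crux of the proof is then the operator-norm bound $\|\Pi_1 + \Pi_2\|_\infty = 1 + 1/|A|$, combined with the trivial estimate $\Tr[(\Pi_1+\Pi_2)\zeta] \leq \|\Pi_1+\Pi_2\|_\infty$. A short calculation on vectors of the form $\ket{\phi}^A \otimes \ket{\Phi}^{\tilde{A} R}$ in the range of $\Pi_2$ yields $\Pi_2 \Pi_1 \Pi_2 = |A|^{-2}\, \Pi_2$, whence $\|\Pi_1 \Pi_2\|_\infty = 1/|A|$; the standard identity $\|P + Q\|_\infty = 1 + \|PQ\|_\infty$ for two projectors with trivial common range then gives the claim. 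Putting everything together forces $1 + \cos(2\alpha) \leq 1 + 1/|A|$, i.e.\ $|A| \leq 1/\cos(2\alpha)$, which is the stated bound; the hypothesis $\epsilon < 1/\sqrt{2}$ appears naturally as $\cos(2\alpha) > 0$. The one step that truly requires care is the identity $\Pi_2 \Pi_1 \Pi_2 = |A|^{-2} \Pi_2$: it encodes the monogamy obstruction for two maximally entangled states sharing the reference $R$ and is what sharpens the naive Cauchy--Schwarz estimate $\|\Pi_1 \Pi_2\|_\infty \leq 1/\sqrt{|A|}$ to the $1/|A|$ bound needed to recover the claimed constant $\log(1/\cos(2\alpha))$ rather than $2\log(1/\cos(2\alpha))$.
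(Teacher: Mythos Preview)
Your proof is correct and takes a genuinely different route from the paper. The paper stays within the smooth-entropy framework: from the code condition and data processing it derives $\log N \leq -H_{\max}^\epsilon(R|B^n)_\sigma$, then uses the duality $-H_{\max}^\epsilon(R|B^n)_\sigma = H_{\min}^\epsilon(R|E^n)_\sigma$ together with anti-degradability (data processing $E^n \to B^n$) to also obtain $\log N \leq H_{\min}^\epsilon(R|B^n)_\sigma$; adding the two bounds and invoking Lemma~\ref{lemma: H_min < H_max+terms} controls $H_{\min}^{\sin\alpha} - H_{\max}^{\sin\alpha}$ by $\log(1/\cos^2(2\alpha))$ and yields the result. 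You bypass this machinery entirely with a direct monogamy argument: build Eve's clone of Bob's decoder via the anti-degrading map, form the joint decoded state $\zeta^{A\tilde A R}$, and turn the two fidelity constraints into $2\cos^2\alpha \leq \|\Pi_1 + \Pi_2\|_\infty$, which the exact identity $\Pi_2\Pi_1\Pi_2 = |A|^{-2}\Pi_2$ (hence $\|\Pi_1\Pi_2\|_\infty = 1/|A|$ and $\|\Pi_1+\Pi_2\|_\infty = 1+1/|A|$) converts to $1 + \cos(2\alpha) \leq 1 + 1/|A|$. What your approach buys is a fully elementary, self-contained argument with a transparent geometric picture of why two maximally entangled states sharing a single reference $R$ cannot both be well approximated; what the paper's approach buys is uniformity with the private-capacity proof of Section~\ref{Sec:StrongConversePrivate}, which runs on the very same entropy lemma and so handles both capacities with one tool.
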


\begin{proof}
%
%
We start by using the fact that for a maximally entangled state $\Phi^{RA}$ of dimension $N$, $H_{\max}(R|A)_{\Phi}=-\log N$, namely 
\begin{align}\label{eq: C_Q_1}
    \log N(n,\epsilon)=-H_{\max}(R|A)_{\Phi}
    \leq -H_{\max}^{\epsilon}(R|A)_{\xi} \leq -H_{\max}^{\epsilon}(R|B^n)_{\sigma}
\end{align}
where the first inequality follows from the definition of smooth max-entropy and that
$P(\Phi^{RA},\xi^{RA})\leq \epsilon$ holds. 
The second inequality is due to data processing. 
From the above inequality and the duality relation $-H_{\max}^{\epsilon}(R|B^n)_{\sigma}=H_{\min}^{\epsilon}(R|E^n)_{\sigma}$ of Eq.~(\ref{eq: smooth duality}) we obtain
\begin{align}\label{eq: C_Q_2}
    \log N(n,\epsilon)\leq H_{\min}^{\epsilon}(R|E^n)_{\sigma} \leq H_{\min}^{\epsilon}(R|B^n)_{\sigma},
\end{align}
 where the second inequality is due to data processing: $B^n$ can be obtained  
by applying a CPTP map on $E^n$.
Let $ \alpha:=\sin^{-1}(\epsilon)$.
By adding Eq. (\ref{eq: C_Q_1}) and Eq. (\ref{eq: C_Q_2}) we obtain
\begin{align}\label{eq: C_Q_3}
    2\log N(n,\epsilon)&\leq H_{\min}^{\sin \alpha}(R|B^n)_{\sigma}-H_{\max}^{\sin \alpha}(R|B^n)_{\sigma}\nonumber\\
    & \leq H_{\max}^{\sin \alpha}(R|B^n)_{\sigma}-H_{\max}^{\sin \alpha}(R|B^n)_{\sigma}+\log (\frac{1}{\cos^2(2\alpha)}) \nonumber\\
    &=  \log (\frac{1}{\cos^2(2\alpha)}). 
\end{align}
where the second line follows from Lemma~\ref{lemma: H_min < H_max+terms}, 
and it holds for $2\alpha <\frac{\pi}{2}$. This implies that for any $\epsilon < \frac{1}{\sqrt{2}}$  the pretty strong converse holds. 
\end{proof}

\section{Discussion}\label{Sec:Discussion}

In this work we report progress on the problem of proving a strong converse for the private and quantum capacity. In particular we prove a strong converse for the former and a pretty strong converse for the later for the particular choice of anti-degradable channels. 
While these go beyond previous results, numerous open questions remain. Our proof strategy appears to be particular to the chosen class of channels. Proving something similar for degradable channels would constitute major progress in the field. 

\bigskip
\noindent \textbf{Acknowledgments.}
%
%
We are grateful to Andreas Winter for his time and many helpful discussions taking place
at the ``Workshop on Information Theory and Related Fields: In Memory of Ning Cai'' in Bielefeld, Germany. We extend
our sincere gratitude to the organizers of that event, particularly Christian Deppe, for their warm hospitality
and gracious invitation. 
We would also like to thank Patrick Hayden for his valuable comments on the draft.
ZBK was supported by the Marie Sk{\l}odowska-Curie Actions (MSCA)
European Postdoctoral Fellowships (Project 101068785-QUARC) and 
and the Ada Lovelace Fellowship at Perimeter Institute for Theoretical Physics. CH received funding by the Deutsche Forschungsgemeinschaft (DFG, German
Research Foundation) – 550206990.

\bibliography{lib}

\end{document}